\newtheorem{theorem}{Theorem}
\newtheorem{lemma}[theorem]{Lemma}
\newtheorem{corollary}[theorem]{Corollary}
\newenvironment{remark}[1][Remark]{\begin{trivlist}
\item[\hskip \labelsep {\bfseries #1}]}{\end{trivlist}}
\newenvironment{noindlist}
 {\begin{list}{\labelitemi}{\leftmargin=0em \itemindent=2.5em}}
 {\end{list}}
\def\CM{{\mathcal M}}
\def\CN{{\mathcal N}}
\title{On Tractability Aspects of Optimal Resource Allocation in OFDMA Systems}
\author[1]{Di Yuan}
\author[2]{Jingon Joung}
\author[2]{Chin Keong Ho}
\author[2]{Sumei Sun}
\affil[1]{{\small Department of Science and Technology, Link{\"o}ping University, Sweden}}
\affil[2]{{\small Institute for Infocomm Research (I$^2$R), A$^*$STAR, Singapore}}
\affil[ ]{\em{{\small Emails: diyua@itn.liu.se,  \{jgjoung; hock; sunsm\}@i2r.a-star.edu.sg}}}
\begin{document}

\date{}
\maketitle

\begin{abstract}
Joint channel and rate allocation with power minimization in
orthogonal frequency-division multiple access (OFDMA) has attracted
extensive attention. Most of the research has dealt with the
development of sub-optimal but low-complexity algorithms.
In this paper, the contributions comprise new
insights from revisiting tractability aspects of computing
optimum. Previous complexity analyses have been limited by assumptions
of fixed power on each subcarrier, or power-rate functions that locally
grow arbitrarily fast. The analysis under the former assumption does
not generalize to problem tractability with variable power, whereas
the latter assumption prohibits the result from being applicable to
well-behaved power-rate functions. As the first contribution, we
overcome the previous limitations by rigorously proving the problem's
NP-hardness for the representative logarithmic rate function. Next, we
extend the proof to reach a much stronger result, namely that the
problem remains NP-hard, even if the channels allocated to each user
is restricted to a consecutive block with given size. We also prove
that, under these restrictions, there is a special case with
polynomial-time tractability. Then, we treat the problem class where
the channels can be partitioned into an arbitrarily large but constant
number of groups, each having uniform gain for every individual
user. For this problem class, we present a polynomial-time algorithm
and prove optimality guarantee. In addition, we prove that the
recognition of this class is polynomial-time solvable.
% We finally discuss a network-flow-based algorithm for the general problem setting. Simulation results show that the proposed algorithm outperforms conventional algorithms of comparable or higher complexity.
\end{abstract}

{\em {\small Keywords}:}
{\small orthogonal frequency-division multiple access, resource allocation, tractability.}

%\newpage
\doublespacing

\section{Introduction}
\label{sec:introduction}

In orthogonal frequency division multiple access (OFDMA) systems,
resource allocation amounts to finding the optimal assignment of
subcarriers to users, and, for each user, the allocation of rate or
power over the assigned subset of sub-carriers. In this paper, we
focus on the problem of minimizing the total transmit power, subject to
delivering specified data rates, by power assignment and
subcarrier allocation. The popularity of OFDMA for wireless
communications has led to an intense research effort in this area. A
majority of the works has focused on heuristic and thus sub-optimal
solutions, see, for example,
\cite{WoChLeMu99,BoGrWoMe07,FeMaRe08,RhCi00,LiYaYa10,SaAnRa09,SEMoCi06},
and the references therein. For global optimality, a branch-and-bound
algorithm is developed in \cite{LiChSu09}.

We address a complementary but fundamental aspect of the resource
allocation problem: To what extent is it tractable? In contrast to the
significant amount of research on algorithms, works along the line of
tractability analysis are few \cite{GrBo06,HuGaKrSr10}.  The edge of
fundamental understanding of problem tractability is formed in respect
of the following limitations. In \cite{GrBo06}, the power on each
subcarrier is assumed to be given, i.e., it can be arbitrarily set to
any fixed value for the purpose of proving complexity. Thus the result
does not apply to the problem where powers are optimization
variables.  In
\cite{HuGaKrSr10}, the resource allocation problem is shown to be
NP-hard for one particular type of rate function $r_{mn}(P)\in
\mathcal{C}_{\text{inc}}$,
where $r_{mn}(P)$ is the rate as a function of power $P$ for the $m$th
user and $n$th subcarrier, and $\mathcal{C}_{\text{inc}}$ is the set
of all increasing functions such that $r_{mn}(0)=0$. By this result,
there exists some (but possibly ill-behaved\footnote{Indeed,
the proof in
\cite{HuGaKrSr10} relies on a power function (i.e., the inverse of the rate
function) growing arbitrarily fast for arbitrarily small rate
increase, meaning that the function is not locally Lipschitz
continuous.}) function in $\mathcal{C}_{\text{inc}}$ for which the
problem is NP-hard. However, the result does not carry over to
well-behaved subclasses of functions in $\mathcal{C}_{\text{inc}}$. In
fact, for any $r_{mn}(P)\in\mathcal{C}_{\text{linear}}$, where
$\mathcal{C}_{\text{linear}}$ is the set of linear functions, the
problem is solvable in polynomial time (see Section
\ref{sec:tractability}). Thus, whereas dealing with $\mathcal{C}_{\text{inc}}$
is in general intractable for the problem in question, there exists some subclass
in $\mathcal{C}_{\text{inc}}$ that admits global optimality at low complexity.

For the (wide) class of functions in $\mathcal{C}_{\text{inc}}$ but
not in $\mathcal{C}_{\text{linear}}$, the tractability of the resource allocation problem remains unknown and thus calls for
investigation. The aspect is of most relevance for the class of
increasing and concave functions $\mathcal{C}_{\text{concave}}$, for
which $\mathcal{C}_{\text{linear}} \subset
\mathcal{C}_{\text{concave}}
\subset \mathcal{C}_{\text{inc}}$ holds.
Indeed, it is very commonly assumed that the rate function is given by
$r_{mn}(P)=\log(1+g_{mn} P)$, with $g_{mn}>0$.  We will thus
investigate the complexity for this representative case of
$\mathcal{C}_{\text{concave}}$, and, if the problem is NP-hard for the
function, examine to what extent restrictions on input structure
(e.g., assuming the number of subcarrier for each user is part of the
input) will admit better tractability.  The specific contributions are
as follows.

%\begin{eqnarray*}{}
%\underbrace{\mathcal{C}_{\text{inc}} } &\supseteq \underbrace{\mathcal{C}_{\text{concave}} } \supseteq &\underbrace{\mathcal{C}_{\text{linear}}}. \\
%\text{NP hard} & \text{?} &  \text{polynomial-time}
%\underbrace{\mathcal{C}_{\text{linear}}} & \subseteq &  \underbrace{\mathcal{C}_{\text{inc}} }.\\
%\text{polynomial-time} & &  \text{NP hard}
%\end{eqnarray*}
%\begin{equation}\nonumber
%\underset{\text{polynomial-time}}{\underbrace{\mathcal{C}_{\text{linear}}}}  \subseteq \underset{\text{NP hard}}{\underbrace{\mathcal{C}_{\text{inc}}}}.
%\end{equation}

%\subsection{Contributions}

%The contributions are two-fold, focusing on optimum and close-to-optimal solutions. Firstly, we investigate aspects related to tractability and computation of global optimum solutions.
% Secondly, we propose a class of heuristic schemes that can outperform known solutions in the literature.
%We assume in the OFDMA system that
%each subchannel consists of one subcarrier; the results extend
%immediately to the case where each subchannel consists of multiple
%subcarriers with the same signal-to-noise ratio (SNR).

\begin{itemize}
\item We rigorously prove that given the representative logarithmic rate
function $r_{mn}(P)=\log(1+g_{mn} P)$ with $g_{mn}> 0$, the resource
allocation problem is NP-hard.  Since this rate function is in
$\mathcal{C}_{\text{concave}} $, it follows that in the hierarchy
$\mathcal{C}_{\text{linear}} \subset \mathcal{C}_{\text{concave}}
\subset \mathcal{C}_{\text{inc}}$, the hardness result holds
except for the linear case.  The contribution leads a significant
refinement of the result of tractability.

%\begin{eqnarray*}{RCC}
%\underbrace{\mathcal{C}_{\text{linear}}} &\subseteq &\underbrace{ \mathcal{C}_{\text{concave}} \subseteq  \mathcal{C}_{\text{inc}}}. \\
%\text{polynomial-time}& &  \text{NP hard}
%\end{eqnarray*}
%\begin{equation}\nonumber
%\underset{\text{polynomial-time}}{\underbrace{\mathcal{C}_{\text{linear}}}} \subseteq \underset{\text{NP hard}}{\underbrace{ \mathcal{C}_{\text{concave}} \subseteq  \mathcal{C}_{\text{inc}}}}.
%\end{equation}
%This result refines, rather than contradicts (see also our discussions earlier),
%the analysis in \cite{HuGaKrSr10}.

%
%\footnote{\cite{HuGaKrSr10} proved that there exists a rate function in $\mathcal{C}_{\text{inc}}$ leading to an NP-hard problem. This does not imply that {\em any} rate function in $\mathcal{C}_{\text{inc}}$ leads to an NP-hard problem; e.g., increasing linear rate functions admit polynomial-time complexity.},

\item We extend the NP-hardness analysis to arrive at
a much stronger result. Namely, the problem remains NP-hard, even if
the following two restrictions are {\em jointly} imposed: 1) the
subcarriers allocated to each user form a consecutive block, and 2)
the block size is given as part of the problem's input.  We also prove
that, with these two restrictions, there is a special case admitting
polynomial-time tractability.

\item We identify a tractable problem subclass, with
the structure that the channels can be partitioned into an {\em
arbitrarily many but constant} number of groups, possibly with varying
group size, and the channels within each group have uniform gain for
each user (but may differ by user). The original problem is, in fact,
equivalent to having the number of groups equal to the number of
subcarriers. The tractable subclass goes beyond the logarithmic
rate function -- the result holds as long as the single-user resource
allocation is tractable. Moreover, we prove that recognizing this
problem class is tractable as well.

%\item We present a network-flow-based algorithm. In each iteration, the
%algorithm performs a generalized optimal matching.
%%The number of channels in the matching acts as an input-adaptive control parameter.
%%We prove that the algorithm is optimal for the tractable problem class with one channel group.
%%Simulations, using integer programming in benchmarking, are provided to show the efficiency of the algorithms in approaching high-quality solutions for the general case.
%Simulation results show that the proposed algorithm outperforms conventional algorithms of comparable or higher complexity.
\end{itemize}

The remainder of the paper is organized as follows. The system model
is given in Section~\ref{sec:system}. In
Section~\ref{sec:tractability}, we provide and prove the base result
of NP-hardness.  Section~\ref{sec:tractabilityrestriction} is devoted
to the problem's tractability with restrictions on channel allocation.
In Section~\ref{sec:tractableclass}, we consider the problem class
with structured channel gain and prove its tractability. In addition,
we prove that recognizing this problem class is computable in
polynomial time.
% The network-flow-based algorithm is detailed in Section \ref{sec:flow}.
%We report and analyze simulation results in Section \ref{sec:simulation}, and
Conclusions are given in Section~\ref{sec:conclusion}.

\section{System Model}
\label{sec:system}

Consider an OFDMA system with $M$ users and $N$ subcarriers.  In this
paper, the terms subcarrier and channel are used interchangeably.  For
convenience, we define sets $\CM = \{1, \dots, M\}$ and $\CN =
\{1, \dots, N\}$. Notation $r_{mn}(P_{mn})$ is reserved for the rate as a function of the transmission power $P_{mn}\geq 0$
for the $m$th user and $n$th subcarrier. The inverse function $r^{-1}_{mn}$,
returning the power for supporting a given rate, is denoted by
$f_{mn}$.  The required rate of user $m$ is denoted by $R_m$.

The forthcoming analysis focuses on the representative increasing, concave rate function $r_{mn}(P_{mn}) =
\log_2(1+g_{mn} P_{mn})$, where $g_{mn}>0$ represents the channel gain (normalized such that the noise variance is one).
Hence, the corresponding signal-to-noise ratio (SNR) is
$g_{mn}P_{mn}$. Some of the tractability results generalize to
$\mathcal{C}_{\text{inc}}$; we will explicitly mention the
generalization when it applies.

The optimization problem is to minimize the total power by joint
channel and rate allocation, such that the users' required rate
targets are met. The problem is formulated formally below.  Throughout
the rest of the paper, we refer to the problem as minimum-power
channel allocation (MPCA).

\begin{noindlist}
\item[{\bf Input}:] User set $\CM = \{1, \dots, M\}$ and
channel set $\CN = \{1, \dots, N\}$ with $M \leq N$, positive
channel gain $g_{mn}, m \in \CM, n \in \CN$, and positive rate targets
$R_m, m \in \CM$.

\item[{\bf Output}:] A channel partitioning represented by $(\CN_1, \dots, \CN_M)$,
where $\CN_m \subset \CN, m \in \CM$, $\CN_{m_1} \cap \CN_{m_2} = \emptyset$,
for all $m_1, m_2 \in \CM, m_1
\not=m_2$, and non-negative power $P_{mn}$, $m \in \CM, n \in \CN_m$,
such that $\sum_{n \in \CN_m} \log_2(1+P_{mn}g_{mn}) \geq R_m, m
\in \CM$ and the total power $\sum_{m \in \CM} \sum_{n \in \CN_m}
P_{mn}$ is minimized.
\end{noindlist}

By the rate function, there is a unique mapping between rate
allocation and power expenditure. In addition, at optimum, the rate may
be zero on some of the allocated channels of a user. Given the channel
allocation, power optimization is determined by solving the
single-user rate allocation problem by water-filling \cite{Bi90} in
linear time (e.g.,
\cite{FeMaRe08}). The combinatorial nature of the multi-user problem
stems from the fact that the users can not share a subcarrier, and the
core of problem-solving is channel partitioning.

\section{Tractability of MPCA: Base Results}
\label{sec:tractability}

To motivate the investigation of tractability in view of the current
literature, two remarks are noteworthy. First, an NP-hard
problem may become tractable by imposing restrictions to the structure
of its input parameters (e.g., the shape of the objective function).
%For example, for the well-known problem of
%scheduling in ad hoc networks, the complexity, assuming that
%the channel gain can behave arbitrarily, was proved in
%\cite{BjVaYu04}. For the same problem with geometric channel gain, however, the
%complexity remained open until the formal proof in
%\cite{GoOsWa07}. The structure and shape of the cost function also have strong
%impact on complexity.
An example is the traveling salesman problem (TSP) having a cost
function with the so called Klyaus-matrix structure (meaning that
inversed triangular distance inequality holds). In this case, TSP is
solvable in polynomial time \cite{BuDeVaVaWo98}.  For MPCA, it
is in fact polynomially solvable, if the power-rate function would be
linear (see the end of this section). As the second
remark, an NP-hard problem may become tractable by removing some of
its constraints (and hence enlarging the solution space). For example,
minimum spanning tree (MST) with constrained node degree is NP-hard,
but becomes easily-solved if this constraint is removed.

%Before discussing the complexity results, a few remarks on
%complexity are noteworthy. In combinatorial optimization, the number
%of possible combinations of discrete choices is typically exponential
%in the input size. Yet some of problems admit time-efficiently
%algorithms guaranteeing global optimum. Consider for example the
%traveling salesman problem (TSP) in a complete graph of $n$ nodes, and
%matching in a bipartite graph of $2n$ nodes. Both have $n!$ potential
%solutions. Whereas the former is NP-hard, the latter can be solved
%easily using, for example, the well-known Hungarian method.

It has been widely accepted that MPCA and other related OFDMA resource
allocation problem are difficult. However, to the best of our
knowledge, no formal analysis other than the results in
\cite{GrBo06,HuGaKrSr10} is available. The study in \cite{GrBo06}
formalizes the NP-hardness result with the assumption that the powers
on all channels are given.  This is equivalent to introducing
constraints fixing the power values.  By the second remark above, the
result does not answer the tractability if these constraints are
removed (that is, the original problem with variable power). Indeed,
for any linear rate function as well as the problem class in Section
\ref{sec:tractableclass}, the NP-hardness proof in \cite{GrBo06}
remains valid for fixed power, but these problem classes with variable
power are solvable in polynomial time. For the analysis in
\cite{HuGaKrSr10}, the proof requires unbounded power growth for
arbitrarily small rate increase. Specifically, for the power function
$f$, $f(n+\frac{1}{k})$ is $k$ times higher than $f(n)$, for arbitrary
positive integers $n$ and $k$.  This assumption of ill-behaved
power-rate function excludes not only $f(x)= 2^x-1$ (the inverse of
the logarithmic rate function), but also all locally Lipschitz
continuous functions.  Recall that a (not necessarily continuous)
function $f$ is locally Lipschitz continuous, if for any $x$, there
exists a small real number $\epsilon$ and an arbitrarily large but
constant real number $D$, such that $|f(x+\epsilon)-f(x)|
\leq D \epsilon$, that is, the growth of the function is bounded when
the change in the input diminishes.  Clearly, $f$ is not locally
Lipschitz continuous, if $f(n+\frac{1}{k})$ increases by factor $k$
over $f(n)$ for any $n$ and arbitrarily large $k$. Hence, by the
previous remark of the impact of cost function on tractability, the
tractability under more well-behaved functions calls for investigation.

We provide the tractability results that overcome the limitations of
the currently available analysis in two aspects. First, we present a
rigorous proof of the problem's NP-hardness with the representative
logarithmic rate function. Second, in the next section, we extend the
proof to reach a much stronger result, stating that the problem
remains NP-hard even with two heavy restrictions on channel
allocation.

\begin{theorem}
\label{theo:nphard}
MPCA, as defined in Section \ref{sec:system}, is NP-hard.
\end{theorem}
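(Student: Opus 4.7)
The plan is to establish NP-hardness by a polynomial-time reduction from a classical NP-complete number-partition problem. A natural candidate is EQUITABLE PARTITION (or one of its standard relatives): given $2n$ positive integers $a_1,\ldots,a_{2n}$ summing to $2B$, decide whether there is a subset of cardinality $n$ summing exactly to $B$. Given such an instance, I would construct an MPCA instance with $M=2$ users, $N=2n$ subcarriers, equal rate targets $R_1=R_2=R$, and channel gains $g_{mn}$ that encode the numbers $a_n$; a convenient choice is $g_{1n}=g_{2n}=2^{a_n}$, so that both users see the same gain on each channel, eliminating any user-asymmetry and isolating the combinatorial difficulty in the partition itself.

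I would then apply water-filling on each prospective channel partition $(\CN_1,\CN_2)$ to obtain a closed-form expression for the minimum power. Assuming $R$ is chosen large enough that every allocated channel carries positive power at optimum, the standard computation yields
\[
P(\CN_1,\CN_2) \;=\; \sum_{m=1}^{2}\Bigl(|\CN_m|\cdot 2^{(R-A_m)/|\CN_m|}\Bigr) \;-\; \sum_{n\in\CN} 2^{-a_n},\qquad A_m := \sum_{n\in\CN_m} a_n.
\]
Since the rightmost sum is a constant independent of the partition, minimizing $P$ is equivalent to minimizing $F(\CN_1,\CN_2)=\sum_{m=1}^{2}|\CN_m|\cdot 2^{(R-A_m)/|\CN_m|}$.

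The core of the reduction is then to prove that $F$ is strictly minimized precisely at $|\CN_1|=|\CN_2|=n$ and $A_1=A_2=B$, i.e., at a YES-certificate of EQUITABLE PARTITION. The ``balanced-sum'' direction follows cleanly from strict convexity of $x\mapsto 2^{-x/k}$: for fixed cardinalities $k_1=k_2=k$, Jensen's inequality applied to $A_1+A_2=2B$ forces $A_1=A_2=B$. The ``balanced-cardinality'' direction will be enforced by choosing $R$ so that, in the continuous relaxation over $(k_1,k_2)$ with $k_1+k_2=2n$, the unique stationary point of the objective is $k_1=k_2=n$. Taking $P^\ast$ to be the value of $F$ at this balanced configuration (plus the constant correction) then gives the equivalence: the EQUITABLE PARTITION instance is a YES-instance iff the constructed MPCA instance has optimum $\le P^\ast$.

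The main obstacle I anticipate is two-fold. First, the water-filling analysis must rule out inactive channels at the optimum of the constructed instance: one must either tune $R$ carefully or supply a short structural argument showing that any solution with an idle allocated channel is strictly dominated (e.g., by re-assigning that channel to the other user). Second, the strict separation between the balanced configuration and any unbalanced one must be quantified so that the resulting gap is at least inverse-polynomial in the input bit-length, ensuring that the threshold $P^\ast$ can be encoded succinctly and the reduction is polynomial. If this last step turns out to be delicate with plain PARTITION, a strongly NP-hard variant such as $3$-PARTITION may be a cleaner starting point, since the value restriction $a_i\in(B/4,B/2)$ built into $3$-PARTITION automatically pins down the cardinality of every allocated block to three, obviating the need to enforce balanced cardinalities through the rate target.
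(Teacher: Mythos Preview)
Your reduction has a fatal encoding obstacle that you did not flag. EQUITABLE PARTITION (and PARTITION) is only \emph{weakly} NP-hard: the hard instances require the integers $a_n$ to have values exponential in the number of items, i.e., polynomially many \emph{bits}. Your construction sets $g_{mn}=2^{a_n}$, so the channel gains themselves need exponentially many bits to write down, and the reduction is not polynomial-time. This is not a cosmetic issue; it is exactly why the paper's proof does \emph{not} go through a number-partition problem and instead reduces from a bounded-occurrence variant of 3-SAT, where all the numerical data in the constructed MPCA instance are fixed rationals independent of the instance size.

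Your suggested fallback to 3-PARTITION addresses the encoding issue (strong NP-hardness lets you keep the $a_i$ polynomially bounded, so $2^{a_i}$ has polynomial bit-length), but it is a different construction than the one you analysed: you now need $M=n$ users rather than two, and the ``balanced cardinality'' step must be redone from scratch, since the $(B/4,B/2)$ range only forces triples when the sums are already equal to $B$, which is precisely what you are trying to establish. You would also still owe a quantitative separation lemma showing that every non-perfect allocation costs at least an inverse-polynomial surplus over the (generally irrational) threshold, together with an argument that water-filling activates every allocated channel in the constructed instance. None of this is carried out in your proposal. In contrast, the paper builds a gadget per variable (two ``literal'' users sharing a high-gain ``super-channel'' and six low-gain ``literal'' channels) and a gadget per clause (one user with access to three literal channels plus an auxiliary channel), and proves via a sequence of short lemmas that at optimum each literal user uses either the super-channel alone or all three literal channels, so that the clause users can all meet a clean power threshold if and only if the 3-SAT instance is satisfiable.
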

\begin{proof}
As the proof is rather technical, we outline the basic idea
and defer the details to Appendix A. The proof uses a reduction from
3-satisfiability (3-SAT). Two groups of users are defined. At optimum,
each user in the first group either uses one channel of superior channel gain,
or splits the rate on three inferior channels, but not
both. This corresponds to the true/false value assignment in
3-SAT. The optimal power for this group of users is a constant,
while the optimal power for the second user group gives the correct
answer to 3-SAT. 
\end{proof}

\begin{corollary}
\label{theo:nphardrate}
MPCA remains NP-hard, even if
the rate requirements of the users are uniform.
\end{corollary}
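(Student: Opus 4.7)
The plan is to adapt the 3-SAT reduction used for Theorem \ref{theo:nphard} so that every user ends up with a common rate requirement. Let $R = \max_{m \in \CM} R_m$ and $\Delta_m = R - R_m \geq 0$ for every $m$; the goal is to pad every user's target up to $R$ without altering the combinatorics of the original construction.

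Concretely, for each user $m$ with $\Delta_m > 0$ I would add a fresh \emph{private} channel $n_m$ to $\CN$. The gain $g_{m n_m}$ is chosen large enough that supporting rate $\Delta_m$ on $n_m$ costs user $m$ at most a prescribed amount $\delta_m > 0$, while $g_{m' n_m}$ is chosen small (but strictly positive) for every non-owner $m' \neq m$, so that assigning any positive rate to $m'$ on $n_m$ would cost more than the global power threshold used in the decision version of MPCA. Routine bookkeeping shows that these gains can be encoded with polynomially many bits, because the power threshold for the original instance is polynomial-sized. The resulting instance has common rate requirement $R$ for every user.

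The key argument is then that in any optimal solution to the padded instance, each private channel $n_m$ is assigned to its owner $m$ and serves exactly $\Delta_m$ units of rate: any other allocation is either infeasible within the threshold or strictly suboptimal, since the owner can always discharge $\Delta_m$ on $n_m$ more cheaply than on any original channel. The remaining $R_m$ units of rate for each user are then served exactly as in the original reduction. Consequently the optimal total power of the padded instance equals that of the original instance plus the polynomially computable constant $\sum_m \delta_m$, and the decision threshold transfers verbatim, yielding NP-hardness of MPCA under uniform rate requirements.

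The main obstacle is the quantitative part: one must bound the threshold $T$ produced by Theorem \ref{theo:nphard} polynomially in the 3-SAT size, then choose $g_{m' n_m}$ small enough (roughly $g_{m' n_m} \ll 1/T$) and $g_{m n_m}$ large enough (roughly $g_{m n_m} \gg 2^R$) so that the strict separation between owners and non-owners of private channels holds in every optimal solution, while keeping all numerical values encoded in polynomially many bits. Once these numerics are nailed down, the remainder of the argument is routine.
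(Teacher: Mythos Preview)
The paper's proof is a single line: the 3-SAT reduction used for Theorem~\ref{theo:nphard} already sets $R_m = 1$ for every user $m\in\CM$, so uniform rate targets are built in and the corollary is immediate. No padding or modification of the construction is required.

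Your padding construction is therefore unnecessary, and as a general device it also has a genuine gap. You choose $g_{m n_m}$ large (``roughly $g_{m n_m} \gg 2^R$'') so that carrying $\Delta_m$ on $n_m$ is cheap. But the single-user optimality condition equalises $\ln(2)\,2^{x}/g$ across all channels that carry positive rate for a given user. With $g_{m n_m}$ that large, the derivative on $n_m$ even at rate $R$ is $\ln(2)\,2^{R}/g_{m n_m} \ll \ln(2)$, whereas on any original channel of the reduction (all gains there are at most $1$) the derivative at rate $0$ is already at least $\ln(2)$. Hence water-filling places the \emph{entire} rate $R$ on the private channel, not just $\Delta_m$; user $m$ then never competes for any of the original channels, and the combinatorial structure of the reduction collapses. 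Your assertion that ``the remaining $R_m$ units of rate for each user are then served exactly as in the original reduction'' is thus unjustified and, for the gains you propose, false. A workable padding would have to tune $g_{m n_m}$ delicately so that the water level puts exactly $\Delta_m$ on $n_m$, but that level depends on which original channels user $m$ ends up with---precisely the unknown you are reducing to.
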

\begin{proof}
Follows immediately from the equal-rate values used in the proof of
Theorem \ref{theo:nphard}.
\end{proof}

Earlier in this section, it was claimed that MPCA with any linear rate
function is tractable. Even if this case is not much of practical
interest, it is instructive in showing the importance of
input assumption on problem tractability.

\begin{theorem}
\label{theo:linear}
MPCA with linear rate function $r_{mn}(P) = \ell_{mn}P$, where $\ell_{mn}
\geq 0$, $m \in \CM, n \in \CN$, is solvable in polynomial time.
\end{theorem}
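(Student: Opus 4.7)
The plan is to reduce MPCA with a linear rate function to the classical (minimum-cost) assignment problem on a bipartite graph, which admits well-known polynomial-time algorithms such as the Hungarian method.

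First, I would observe that for any fixed user $m$ and any fixed subset $\CN_m \subseteq \CN$, the single-user subproblem is the linear program: minimize $\sum_{n \in \CN_m} P_{mn}$ subject to $\sum_{n \in \CN_m} \ell_{mn} P_{mn} \geq R_m$ and $P_{mn} \geq 0$. Since each unit of rate costs $1/\ell_{mn}$ units of power on channel $n$, an optimal solution concentrates all the rate on a single channel $n^*(m) \in \arg\max_{n \in \CN_m} \ell_{mn}$, with $P_{mn^*(m)} = R_m/\ell_{mn^*(m)}$ and $P_{mn} = 0$ elsewhere (channels with $\ell_{mn}=0$ being useless). Consequently, in any optimal solution of MPCA it suffices to let each user occupy exactly one channel.

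Second, using this reduction, MPCA with linear rates becomes: find an injective map $\pi : \CM \to \CN$ (feasible since $M \leq N$) that minimizes $\sum_{m \in \CM} R_m/\ell_{m,\pi(m)}$, with the convention that the cost is $+\infty$ when $\ell_{m,\pi(m)} = 0$. Channels not in the image of $\pi$ are simply left unused. This is exactly an instance of the minimum-cost bipartite assignment problem with $M$ agents and $N$ tasks, whose cost matrix has entries $c_{mn} = R_m/\ell_{mn}$.

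Third, I would invoke a standard polynomial-time algorithm for the assignment problem, for example the Hungarian method, which runs in $O((\max\{M,N\})^3)$ time; constructing the cost matrix takes $O(MN)$ time, and recovering the optimal powers from $\pi$ is trivial. Finally, I would address the feasibility condition: the instance admits a finite solution if and only if the assignment problem has a matching of finite cost, which can be checked within the same algorithm. Since every step is polynomial, MPCA with $r_{mn}(P)=\ell_{mn} P$ is solvable in polynomial time.

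The only subtle step is the first one — formally arguing that single-channel allocation per user is optimal. This is routine (it is the extreme-point structure of a one-constraint LP with non-negativity), so there is no real obstacle; the main content of the argument is really the reduction to the assignment problem.
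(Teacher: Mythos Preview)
Your proposal is correct and follows essentially the same approach as the paper: argue that with a linear rate function each user optimally places all rate on a single best channel, then reduce to a minimum-cost bipartite assignment problem solved by the Hungarian method. The only cosmetic differences are that the paper pads the user side with $N-M$ zero-cost artificial users to obtain a square assignment instance and writes the edge cost as $1/\ell_{mn}$ (omitting the row-constant factor $R_m$), whereas you use an injective map and handle $\ell_{mn}=0$ and infeasibility explicitly; none of this changes the substance of the argument.
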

\begin{proof}
Since the rate (and hence power) function is linear, it follows that
for any user $m$, it is optimal to allocate the entire rate $R_m$ to a
single channel.  Specifically, denoting by $\CN_m$ the channel set
allocated to $m$, the optimal selection is the channel giving $\min_{n
\in \CN_m} 1/\ell_{mn}$. Because of this structure at optimum,
MPCA reduces to pairing the $M$ users with $M$ out of the $N$
channels.  Hence the problem is equivalent to a minimum-weight
matching problem (also known as minimum-cost assignment
\cite{AhMaOr93}) in a bipartite graph with node sets $\CM \cup \{M+1,
\dots, N\}$ and $\CN$; the former represents the augmentation of $\CM$
by $M-N$ artificial users. For edge $(m,n)$, with $m \in \CM$ and $n
\in \CN$, the cost is $1/\ell_{mn}$. All edges adjacent to artificial
users have zero cost.  Because the assignment problem is
polynomial-time solvable, the theorem follows.
\end{proof}

\begin{remark}
In light of Theorem~\ref{theo:nphard},
Corollary~\ref{theo:nphardrate}, and Theorem~\ref{theo:linear}, we
remark on the significance of assumption of input on tractability, by
revisiting the result provided in \cite{GrBo06}.  In this reference,
the result is proven for fixed power, that is, the power of each
channel is part of problem input.  Under this condition,
\cite{GrBo06} provides an elegant hardness proof of a reduction from the number
partitioning problem, by setting specific power values on the
channels.  As long as power is fixed, a line-by-line copy of the proof
in \cite{GrBo06} remains valid even if the underlying MPCA rate function
is linear. A similar observation applies to the tractable
problem class that will be detailed in Section
\ref{sec:tractableclass}.
In conclusion, for the original MPCA problem having power allocation
as part of the output (for which the key assumption of \cite{GrBo06}
does not apply), our analysis provides new insights in the
tractability rather than contradicting the previous results. $\Box$
\end{remark}

\section{Tractability with Restrictions on Channel Allocation}
\label{sec:tractabilityrestriction}

Consider imposing jointly two restrictions to channel allocation.
First, the number of channels to be allocated to each user is
given. Tractability under this restriction is of significance to
two-phase OFDMA resource allocation (see
\cite{KiLiLi03}) that determines the number of channels per user in
phase one, followed by channel allocation in phase two.  The second
restriction is the use of consecutive channels, that is, the channels
of every user must be consecutive in the sequence $1,\dots,N$; this
channel-adjacency constraint has been considered in, for example,
\cite{WoOtMc09}. We prove that MPCA remains NP-hard even with these two seemingly
strong restrictions, although there is a special case admitting
polynomial-time tractability.

\begin{theorem}
\label{theo:nphardrestriction}
MPCA remains NP-hard, even if
the number of channels allocated to each user, i.e., the cardinality
of $\CN_m, m=1, \dots, M$, is given in the input, and $\CN_m, m=1,
\dots, M$ must contain consecutive elements in the channel sequence
$1,\dots,N$.
\end{theorem}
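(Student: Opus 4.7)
The plan is to extend the 3-SAT reduction underlying Theorem \ref{theo:nphard} and reshape it so that, in the resulting MPCA instance, each user can only meet its rate target by occupying a fixed-size consecutive interval of channels. The overall strategy is (i) to choose per-user block sizes so that both ``choices'' available to each variable user fit inside the same block, and (ii) to relabel the channels so that the sets used in any optimum coincide with these contiguous blocks.

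First, I would convert each variable user of the base reduction into a user with block size four, whose block consists of one superior channel and three inferior channels. In the ``true'' regime the user puts its full rate on the superior channel; in the ``false'' regime it splits its rate across the three inferior channels; in both regimes the unused channels of the block receive zero rate and therefore cost zero power, because $f_{mn}(0)=0$. Fixing the block size to four thus accommodates both regimes and does not change the optimum value from Theorem \ref{theo:nphard}. Clause users get a block size equal to the fixed number of channels they use in the base reduction.

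Second, I would lay out the channel sequence so that each variable user's four channels are four consecutive positions, and so that each clause user's channels are consecutive as well. The natural conflict is that a single clause in the base reduction draws on inferior channels from three different variables, which cannot all sit next to one clause block if variables occupy disjoint intervals. I would circumvent this by splitting each clause user into three local sub-users, placed immediately adjacent to the blocks of the three referenced variables, and coupling them through shared rate targets realized by careful gain scaling; filler users with low rate demand and high-gain channels fill any leftover positions so that the block sizes tile a prefix of $\{1,\dots,N\}$ consistently. A satisfying assignment then induces, by construction, an allocation attaining the target power bound derived in Theorem \ref{theo:nphard}.

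The main obstacle is the converse: showing that the extra freedom introduced by placing blocks at arbitrary consecutive positions does not produce unintended low-power solutions that are disconnected from any 3-SAT assignment. I would handle this by a case analysis on the starting position of each block, combined with the strict convexity of $f_{mn}(R)=(2^R-1)/g_{mn}$ and the gap between superior and inferior channel gains: any non-canonical placement either forces a variable user to carry its rate on low-gain channels only, or forces two users to compete for the same favorable channel, and in either case the total power strictly exceeds the target bound. The delicate part is calibrating the superior/inferior gain ratio and the filler-user gains so that this strict inequality holds uniformly over all non-canonical placements, while keeping every numerical quantity polynomially bounded so that the reduction is polynomial-time.
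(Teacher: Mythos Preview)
Your proposal has two genuine gaps that prevent it from going through.

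First, the block-size-four idea for variable users breaks the mechanics of the base reduction. In the reduction of Theorem~\ref{theo:nphard}, each binary variable $z$ generates \emph{two} literal users (one for $z$, one for $\hat z$) that share a single super-channel; exactly one of them takes the super-channel and the other takes its three literal channels, thereby freeing three literal channels for clause users. If you pack a variable user's superior channel and its three inferior channels into one fixed contiguous block of size four, then (a) the shared super-channel cannot simultaneously sit inside the block of the $z$-user and the block of the $\hat z$-user, and (b) even if you duplicate it, the three literal channels are now permanently inside the variable user's block and hence unavailable to any clause user. The whole true/false encoding collapses.

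Second, the step where you ``split each clause user into three local sub-users \dots\ coupling them through shared rate targets realized by careful gain scaling'' is not an operation available in MPCA. Each user has an independent additive rate requirement; there is no disjunctive or shared constraint you can impose across users, and no amount of gain scaling produces the logical ``at least one of three literals is available'' behaviour that a clause requires.

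The paper sidesteps both issues. Clause users are given block size \emph{one}; a single channel is trivially consecutive, so the clause user simply picks one literal channel if any of its three is free, and its auxiliary channel otherwise. For literal users, two dummy channels (gain $g_\epsilon$) are appended next to every super-channel, and the channel sequence is laid out as $v$ chunks of \{super, dummy, dummy\} followed by $v$ chunks of \{$z,z',z'',\hat z,\hat z',\hat z''$\}. Each literal user then has two candidate size-three consecutive chunks: the super-plus-dummies chunk (dummies carry zero rate, so power is unchanged) or its own triple of literal channels. The converse is handled not by a placement case analysis but by a relaxation argument: drop the restrictions for literal users, observe that Lemmas~\ref{theo:epsilon}--\ref{theo:min} still determine the optimum, and then note that this optimum already satisfies the restrictions after the zero-cost padding with dummies.
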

\begin{proof}
As we prove the result via an augmentation of reduction
proof of in Appendix \ref{sec:appa}, the details are deferred to Appendix
\ref{sec:appb}.
In brief, the proof is built upon introducing a set of additional
channels in the MPCA instance in Appendix \ref{sec:appa}, in a way
such that, at optimum, the channels allocated to each user are
consecutive and the corresponding cardinality is known.
\end{proof}

Consider further narrowing down the problem to the case where $\CN$
consists of $M$ uniform-sized subsets of consecutive channels, and $N$
is a multiple of $M$. For this case, the problem is tractable, as
proven below.

\begin{theorem}
\label{theo:special}
If $N$ is divisible by $M$, and it is restricted to allocate exactly
$\frac{N}{M}$ consecutive channels to each user, then MPCA is tractable
with time complexity $O(\max\{M^3, MN\})$.
\end{theorem}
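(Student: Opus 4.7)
The plan is to exploit the fact that under the given restrictions the channel partition is uniquely determined, so that MPCA collapses to a classical minimum-cost assignment problem between users and fixed channel blocks. Let $k := N/M$. Since each user must receive exactly $k$ consecutive channels, the sets $\CN_1,\dots,\CN_M$ are pairwise disjoint, and $Mk = N$, the only partition of $\CN$ into $M$ consecutive blocks of size $k$ is the canonical one $B_i = \{(i-1)k+1,\dots,ik\}$ for $i = 1,\dots,M$. Thus the only combinatorial decision remaining is which user is assigned to which block.

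For each pair $(m,i) \in \CM \times \{1,\dots,M\}$ I would precompute $c_{m,i}$, the minimum total power for user $m$ to meet rate target $R_m$ using only the channels in $B_i$. As noted in Section \ref{sec:system}, this single-user rate allocation is solved by water-filling in $O(k)$ time. There are $M^2$ such pairs, so precomputation takes $O(M^2 \cdot k) = O(MN)$ time.

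Finally, I would find the bijection $\sigma : \CM \to \{1,\dots,M\}$ minimizing $\sum_{m \in \CM} c_{m,\sigma(m)}$ by applying a standard minimum-cost bipartite matching algorithm (e.g., Hungarian/Kuhn--Munkres) on the complete bipartite graph whose edge weights are the $c_{m,i}$; this costs $O(M^3)$. Summing the two phases yields overall running time $O(MN + M^3) = O(\max\{M^3, MN\})$, matching the claim. The only substantive point to verify is the separability used in the precomputation, which is immediate because both the MPCA objective and its rate constraints decouple across users once a feasible channel partition is fixed; I anticipate no real obstacle beyond this routine observation.
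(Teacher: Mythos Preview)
Your proposal is correct and follows essentially the same approach as the paper: observe that the consecutive, equal-size constraint forces the unique canonical block partition, precompute single-user water-filling costs for each user-block pair in $O(MN)$ total, and solve the resulting $M\times M$ assignment via bipartite matching in $O(M^3)$. The paper's argument differs only in minor presentation (it quotes a general $O(V^2\log V + VE)$ matching bound rather than naming the Hungarian algorithm).
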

\begin{proof}
We prove the result by a polynomial-time transformation to the
well-solved matching problem in a bipartite graph.  For the problem
setting in question, channel partition is unique - the $M$ subsets
created by the partition are $\{1, \dots, N/M\}$, $\{N/M+1,\dots,
2N/M\}$, $\dots$, $\{N-N/M+1, \dots, N\}$, each containing $N/M$
channels. The bipartite graph has $2M$ nodes representing the users
and the subsets of channels. For each pair of the two node groups, say
user $m$ and channel subset $S$, there is an edge of which the cost
equals the power of meeting the user rate using the channels in $S$.
This cost is computed in $O(\frac{N}{M})$ time (by single-user rate
allocation). The complexity of computing all the edge costs is hence
$O(MN)$. Maximum-weighted matching in a bipartite graph of $V$ nodes
and $E$ edges is solved in $O(V^2\log V + VE)$ time
\cite{AhMaOr93}. In our case, the second term is dominating, giving a
time complexity of $O(M^3)$, which completes the proof.
\end{proof}

Theorem \ref{theo:special} provides a generalization of the trivial
case of $M=N$. For $M=N$, solving MPCA amounts to finding an optimal
matching with complexity $O(M^3)$. Moreover, from the proof, it
follows that the analysis generalizes to any rate function in
$\mathcal{C}_{\text{inc}}$, except that the complexity of single-use
rate allocation has to be accounted for accordingly. The observation
yields the following corollary.

\begin{corollary}
\label{theo:specialinc}
The time required for computing the optimum to the MPCA problem class in Theorem \ref{theo:special} is 
of $O(\max\{M^3, M^2 T({\frac{N}{M}})\})$, where
$T({\frac{N}{M}})$ denotes the time complexity of optimal rate
allocation of a single user on $\frac{N}{M}$ channels.
\end{corollary}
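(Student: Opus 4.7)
The plan is to lift the reduction of Theorem \ref{theo:special} essentially verbatim, and then re-derive the complexity bound with an abstract single-user rate-allocation cost $T(N/M)$ in place of the water-filling cost $O(N/M)$ used there.

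First, I would observe that the combinatorial structure is independent of the specific rate function: since $N$ is divisible by $M$ and exactly $N/M$ consecutive channels must be allocated per user, the partition of $\CN$ into $M$ blocks of size $N/M$ is unique. Hence the feasible set of the restricted MPCA is still in bijection with the perfect matchings between users and blocks, regardless of whether $r_{mn}$ is logarithmic or merely an element of $\mathcal{C}_{\text{inc}}$. This step requires no property of the rate function and thus carries over unchanged.

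Next, I would re-account for the edge-cost computation. For each (user, block) pair $(m, S)$, the edge cost is the minimum total power with which user $m$ can meet its target rate $R_m$ using the $N/M$ channels contained in $S$. By the definition of $T(\cdot)$, this single-user rate allocation can be solved in $T(N/M)$ time. Since there are $M \times M = M^2$ such pairs, populating the cost matrix of the associated bipartite graph takes $O(M^2\, T(N/M))$ time in total. I would then invoke, without modification, the same minimum-weight bipartite matching bound of $O(M^3)$ from \cite{AhMaOr93} as in Theorem \ref{theo:special}. Summing the two contributions yields the stated bound $O(\max\{M^3, M^2\, T(N/M)\})$.

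There is essentially no technical obstacle, since the corollary is phrased parametrically in $T$: no assumption on $r_{mn}$ beyond membership in $\mathcal{C}_{\text{inc}}$ is required, and the tractability of the multi-user problem is reduced cleanly to the tractability of the single-user one. The only point worth making explicit in the write-up is that the bipartite-matching step dominates precisely when $T(N/M) = O(M)$, which recovers the $O(\max\{M^3, MN\})$ bound of Theorem \ref{theo:special} in the logarithmic case.
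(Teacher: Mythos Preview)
Your proposal is correct and follows essentially the same approach as the paper, which does not give a separate proof but simply states that the analysis of Theorem~\ref{theo:special} carries over to any rate function in $\mathcal{C}_{\text{inc}}$ once the single-user rate-allocation cost is replaced by the abstract $T(N/M)$. Your explicit accounting of the $M^2$ edge costs and the $O(M^3)$ matching step is exactly the argument the paper leaves implicit.
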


\section{A Tractable Problem Class}
\label{sec:tractableclass}

Denote by $K$ a (possibly large) fixed positive integer independent of
$M$ or $N$.  Let $\mathcal{K}=\{1,2,\cdots,K\}.$ Consider MPCA in
which the channels can be partitioned into (at most) $K$ groups where
every user in the same group has the same channel gain, but the
channel gains differ by user. Thus, we can write $\CN =
\bigcup_{k\in \mathcal{K}}
\CN_k$, such that for any channel $n \in \CN_k$, the channel 
gain depends only on the user index $m$, i.e.,
$g_{mn}=g_m$. Equivalently, the rate functions belong to the subclass
that satisfy $r_{mn}=r_m, n \in \CN_k, k\in\mathcal{K}, m\in\CM.$ Note
that the channel groups may vary in size, and the channel gain still
differs by user within each group.  In the sequel, we refer to the
problem class as $K$-MPCA.  The problem class is justified by
scenarios with $K$ distinct bands and channel difference is
overwhelmingly contributed by the separation of the bands in the
spectrum, whereas the subcarriers with each band are considered
invariant for each user.

Consider $1$-MPCA. The  problem structure is significantly simpler
than the general case.  Namely, the optimization decision is
no longer which, but how many channels each user should use.
% It follows that, for any subset of users $\CM \subset \CM'$, the optimum allocation of $h$ channels (with $h \geq |\CM'|$) among the users in $\CM'$ is independent of channel allocation of the rest the users.
It follows that, for any subset of users $\CM' \subset \CM$, the
optimum allocation of $h$ channels (with $h \geq |\CM'|$) among the
users in $\CM'$ is independent of channel allocation of the rest the
users.  Thus $1$-MPCA exhibits an optimal substructure, i.e., a part
of the optimal solution is also optimal for that part of the
problem. The observation leads to a dynamic programming line of
argument for problem-solving. As proven below, the solution strategy
guarantees optimality in polynomial time.

\begin{theorem}
\label{theo:1mpca}
Global optimum of $1$-MPCA can be computed by dynamic programming in $O(MN^2)$ time.
\end{theorem}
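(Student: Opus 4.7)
The plan is to exploit the fact that in $1$-MPCA every channel has the same gain $g_m$ for user $m$, so only the \emph{number} of channels given to each user, not their identity, influences the cost. I would start by computing, for every user $m$ and every possible count $h \in \{1,\dots,N\}$, the minimum power $c_m(h)$ needed for user $m$ to reach rate $R_m$ using $h$ identical channels. Since the single-user power function $f_m(r) = (2^r - 1)/g_m$ is strictly convex, Jensen's inequality (equivalently, classical water-filling on identical parallel channels) shows that splitting the rate evenly across the $h$ channels is optimal, giving the closed form $c_m(h) = h\,(2^{R_m/h}-1)/g_m$, computable in $O(1)$ time per entry. Assembling the full table takes $O(MN)$ time.

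Next I would set up the dynamic program. Let $V(m,h)$ denote the minimum total power to serve the first $m$ users using exactly $h$ channels in total, with each served user getting at least one channel. The recursion is
\begin{equation*}
V(m,h) = \min_{1 \leq k \leq h-(m-1)} \bigl\{ V(m-1, h-k) + c_m(k) \bigr\},
\end{equation*}
with base case $V(1,h) = c_1(h)$ for $h \geq 1$. The optimal value of $1$-MPCA is $\min_{M \leq h \leq N} V(M,h)$, and the corresponding argmin yields the number of channels per user; because channels within the single group are interchangeable, any partition respecting those counts attains the optimum.

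Correctness of the DP rests on an \emph{optimal substructure} argument: given an optimal assignment with $h_m$ channels to user $m$, the restriction to the first $m-1$ users uses $\sum_{i<m} h_i$ channels and must itself be optimal for that sub-instance, for otherwise one could improve the total. This is precisely the claim stated informally in the paragraph preceding the theorem, so the recursion is justified. For the complexity bound, the table has $O(MN)$ entries and each entry is filled by a minimization over $O(N)$ candidates, yielding $O(MN^2)$ time overall, which dominates the $O(MN)$ preprocessing of the $c_m(h)$ values.

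I do not anticipate a serious obstacle: the only step that is slightly non-obvious is justifying the equal-rate split that defines $c_m(h)$, and that follows from strict convexity of $2^{\,\cdot\,}-1$ together with the feasibility $M \leq N$ ensuring $h \geq 1$ for every user is always achievable. Everything else is a routine resource-allocation DP argument.
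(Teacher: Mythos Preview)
Your proposal is correct and follows essentially the same dynamic-programming approach as the paper: your $V(m,h)$ and recursion coincide with the paper's $c_m(h)$ and its formula $c_m(h)=\min_{k}\{p_m^k+c_{m-1}(h-k)\}$, with identical $O(MN^2)$ accounting. The only cosmetic differences are that you explicitly justify the single-user cost $c_m(h)=h(2^{R_m/h}-1)/g_m$ via convexity and take $\min_{M\le h\le N}V(M,h)$ at the end, which equals $V(M,N)$ since extra identical channels never increase power.
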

\begin{proof}
Consider the partial problem of optimally allocating $h$ channels to
the users in $\{1,\dots,m\}$ with $h \geq m$, and $c_m(h)$ the corresponding optimum
power. Clearly, at the optimum of this subproblem, the number of
channels of user $m$ is an integer in the set $\{1, 2, \dots,
h-m+1\}$. (The upper bound $h-m+1$ corresponds to having $m-1$
channels left for the other $m-1$ users.) Allocating $k \in \{1, 2,
\dots, h-m+1\}$ channels to user $m$, the power equals $p_m^k +
c_{m-1}(h-k)$, where $p_m^k$ denotes the power of user $m$ with $k$
channels. This gives the following recursive formula for computing the
optimal number of channels for user $m$.
\begin{equation}
\begin{split}
\label{eq:dp}
c_m(h) = \displaystyle \min_{k=1,\dots,h-m+1} \{p_m^k + c_{m-1}(h-k)\}
\end{split}
\end{equation}
We arrange the values $c_m(h)$ for $m=1,\dots,M$ and $h=1, \dots, N$
in an $M \times N$ matrix.  Entries corresponding to infeasible
solutions are called invalid, and their values are denoted by
$\infty$. In the matrix, $c_m(h)=\infty$ for all entries where $h<m$,
or $h > N-M+m$. We compute the valid entries as follows.  For the
first row, computing the entries $c_1(1), \dots, c_1(N-M+1)$ in the
given order are straightforward, and each entry requires $O(1)$
computing time.  Next, entries $c_m(m)$, i.e., one channel per user
for the first $m$ users, are calculated in $O(M)$ time for
$m=1,\dots,M$.  The bulk of the computation calculates the remaining
entries row by row, starting from row two. For row $m$, the
computations follow the order $c_m(m+1),\dots,c_m(N-M+m)$. Each of
these entries is calculated using formula \eqref{eq:dp}.  For the
valid entries of a row, the total number of comparisons that they are
used for computing the next row is $1+\dots+N-M+1$.  Hence the
complexity for computing row $m, m=2,\dots,M$, is of $O(N^2)$, and the
overall time complexity is of $O(MN^2)$. The last entry computed,
$c_M(N)$, gives the optimal allocation of the $N$ channels to the $M$
users and hence solves $1$-MPCA. In parallel, the solution is stored
in a second matrix of same size. Solution recording clearly has
lower complexity than $O(MN^2)$, and the theorem follows.
\end{proof}

In the following theorem, we generalize the dynamic programming
concept to any positive integer $K$. The
generalized algorithms are able to solve $K$-MPCA to global optimality.

\begin{theorem}
\label{theo:kmpca}
Global optimum of $K$-MPCA can be computed by dynamic programming in $O(MN^{2K})$ time.
\end{theorem}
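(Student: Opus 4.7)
The plan is to extend the dynamic programming argument of Theorem~\ref{theo:1mpca} from a single channel group to $K$ groups. The key enabling observation is that, within each of the $K$ groups, every channel has the same gain for a given user, so a user's allocation is fully specified by the multiplicity vector $(k_1,\dots,k_K)$ counting how many channels are taken from each group; the identities of the specific channels within a group are irrelevant. Consequently, the optimal substructure of $1$-MPCA carries over: the minimum-power assignment using $h_k$ channels from group $k$, $k \in \mathcal{K}$, among the first $m$ users depends only on the tuple $(h_1,\dots,h_K)$.

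I would define $c_m(h_1,\dots,h_K)$ as the minimum total power to meet the rate targets of users $1,\dots,m$ while consuming exactly $h_k$ channels from group $k$, and $p_m^{(k_1,\dots,k_K)}$ as the minimum power user $m$ needs to reach $R_m$ using $k_j$ channels from group $j$. The latter is a single-user water-filling subproblem on $\sum_j k_j \leq N$ channels, solvable in $O(N)$ time (and in fact in $O(K\log K)$ by exploiting the fact that only $K$ distinct gain values occur). The recursion is
\[
c_m(h_1,\dots,h_K) \;=\; \min \bigl\{\, p_m^{(k_1,\dots,k_K)} + c_{m-1}(h_1-k_1,\dots,h_K-k_K) \,\bigr\},
\]
where the minimum ranges over non-negative integer $K$-tuples with $k_j\leq h_j$ and $\sum_j k_j \geq 1$. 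I take $c_0(0,\dots,0)=0$ and set $c_m(\cdot)=\infty$ on states corresponding to infeasible partial allocations (in particular when $\sum_k h_k < m$). The optimum of $K$-MPCA is then $c_M(N_1,\dots,N_K)$ with $N_k=|\CN_k|$.

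For the complexity count, each of the $M$ layers of the table contains at most $\prod_k (N_k+1)\leq (N+1)^K = O(N^K)$ states, and evaluating the recursion at a state requires scanning at most $O(N^K)$ candidate tuples $(k_1,\dots,k_K)$, which yields $O(N^{2K})$ work per user. Precomputing $p_m^{(k_1,\dots,k_K)}$ for all users and tuples adds $O(M N^{K+1})$, which is absorbed into $O(MN^{2K})$ for $K\geq 1$. Summing over users produces the stated overall bound of $O(MN^{2K})$. The solution itself is recovered in parallel via a second table of the same size, at negligible extra cost.

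The only step that requires real care is the justification that within-group channel interchangeability preserves optimality, so the optimal substructure can be expressed solely in terms of the per-group counts rather than explicit channel subsets; once this is in place, the rest is a routine tabulation that mirrors the proof of Theorem~\ref{theo:1mpca}, with a $K$-dimensional index in place of a scalar and a $K$-dimensional minimization in place of the one-dimensional one.
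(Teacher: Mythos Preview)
Your proposal is correct and follows essentially the same approach as the paper: the same $K$-dimensional dynamic-programming table $c_m(h_1,\dots,h_K)$, the same recursion over user $m$'s per-group multiplicities $(k_1,\dots,k_K)$, and the same $O(MN^{2K})$ accounting of $O(N^K)$ states per layer times $O(N^K)$ candidates per state. The only minor differences are cosmetic (your explicit boundary conditions and precomputation discussion versus the paper's convention $p_m^{(0,\dots,0)}=\infty$ and its claim that $p_m^{(k_1,\dots,k_K)}$ is computable in time linear in $K$), and neither affects the argument or the bound.
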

\begin{proof}
Let $N_j = |\CN_j|, j=1, \dots, K$. For the first $m$ users, denote
by $c_m(h_1, \dots, h_K)$ the optimum power of allocating $h_j$
channels of group, where $h_j \leq N_j$, $j=1,\dots, K$.  Denote by
$p_m^{(k_1,\dots,k_K)}$ the power for user $m$, if it is allocated
$k_j$ channels of channel group $j$, $j=1,\dots,K$. We introduce the
convention that $p_m^{(0,\dots,0)} = \infty$ for convenience.  By
enumerating user $m$'s allocation of channels of the $K$ groups, we
obtain the following recursion formula for $c_m(h_1, \dots, h_K)$.

%\vspace{-0mm}
%\begin{align}
\begin{equation}\label{eq:dpk}
\begin{split}
& c_m(h_1, \dots, h_K)=  \min_{k_j \in \{0,\dots,\min\{h_j, N-M+1\}\}, j=1,\dots,K} \\
& \{p_m^{(k_1,\dots,k_K)} + c_{m-1}(h_1-k_1, \dots,h_K-k_K)\}  \end{split}
\end{equation}
%\end{align}

Extending the algorithm in the proof of Theorem \ref{theo:1mpca}, the
corresponding matrix for $K$-MPCA has dimension $M \prod_{j=1,\dots,K}
N^j$, which does not exceed $O(MN^K)$. To compute an entry, there are
no more than $O(N^K)$ calculations (including addition and comparison)
using $\eqref{eq:dpk}$.  For each calculation, the time required to
compute $p_m^{(k_1,\dots,k_K)}$ is linear\footnote{This result follows by directly applying the single-user rate assignment
to $K$ channel groups, where the channels in each group have
uniform gain.} in $K$.  Since $K$ is a constant, this computation does
not add to the complexity. These observations lead to the overall
complexity of $O(MN^{2K})$. Finally, entry $c_M(N^1, \dots, N^K)$ is
clearly the optimum to $K$-MPCA.  The proof is complete by observing
that, similar to $1$-MPCA, recording the channel allocation solution
does not form the computational bottleneck.
\end{proof}

\begin{remark}
The polynomial-time tractability of $K$-MPCA holds only
if $K$ is not dependent on $M$ or $N$. In fact, the general setting of
MPCA is equivalent to $N$-MPCA, i.e., $N$ channel groups with single
channel each. The dynamic programming algorithm remains applicable for
$K=N$. From Theorem \ref{eq:dpk}, however, the algorithm corresponds
to enumerating the solution space, and the running time is
exponential. $\Box$
\end{remark}

Having concluded the tractability of $K$-MPCA, a natural question to
ask next is whether or not the identification of the problem class
is tractable as well.  Theorem~\ref{theo:recog} states that
this is indeed the case.

\begin{theorem}
\label{theo:recog}
Recognizing $K$-MPCA can be performed in $O(MN)$ time for $K=1$, and
in $O(MN^2)$ time for any $K \geq 2$.
\end{theorem}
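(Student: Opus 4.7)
The plan is to reduce recognition of $K$-MPCA to counting the distinct column vectors of the channel-gain matrix. For each channel $n \in \CN$, define its \emph{gain signature} as the vector $\sigma(n) = (g_{1n}, g_{2n}, \dots, g_{Mn})$. By the definition of $K$-MPCA, an instance belongs to the class if and only if the set $\{\sigma(n) : n \in \CN\}$ contains at most $K$ distinct vectors: any admissible partition $\CN_1, \dots, \CN_K$ forces all signatures within $\CN_k$ to coincide, and conversely grouping channels by identical signatures produces an admissible partition whose number of blocks equals the number of distinct signatures.

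For $K = 1$, every signature must equal $\sigma(1)$. I would iterate over $n = 2, \dots, N$ and compare $\sigma(n)$ with $\sigma(1)$ coordinate by coordinate in $O(M)$ time per channel, yielding the $O(MN)$ bound. The instance is recognized as belonging to $1$-MPCA exactly when all comparisons succeed.

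For $K \geq 2$, I would process the channels sequentially while maintaining a list $L$ of distinct signatures seen so far. Upon encountering channel $n$, compare $\sigma(n)$ against each signature already in $L$ at cost $O(M)$ per comparison, and append $\sigma(n)$ to $L$ if no match is found. Since $|L| \leq N$ throughout and there are $N$ channels, the total work is bounded by $O(MN^2)$. The algorithm accepts the instance if and only if the final size of $L$ is at most $K$; one may abort as soon as $|L|$ exceeds $K$, but this refinement is not needed for the stated bound.

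There is no substantive obstacle: once the equivalence between the $K$-MPCA condition and ``the gain matrix has at most $K$ distinct columns'' is established, only routine complexity accounting remains. The most delicate point to state carefully is why equality of signatures is both necessary and sufficient for a valid grouping, which is immediate from the defining property that $g_{mn} = g_{mn'}$ for all $n, n' \in \CN_k$ and all $m \in \CM$. Thus, the two asymptotic bounds follow directly from the two straightforward procedures above.
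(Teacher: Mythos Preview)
Your proposal is correct and rests on the same characterization as the paper: two channels may share a group precisely when their gain vectors (your ``signatures'') coincide, so recognition reduces to counting equivalence classes of channels under this relation. The paper implements this by building a graph on $\CN$ with edges between equivalent channels and counting connected components, whereas you count distinct signatures directly via a list; both yield the identical $O(MN)$ and $O(MN^2)$ bounds and differ only in bookkeeping.
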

\begin{proof}
For $K=1$, identifying the problem class simply amounts to verifying,
for each user, whether or not all the $N$ channels are of the same
gain; this immediately leads to the $O(MN)$ time complexity
result. For $K \geq 2$, we construct a graph ${\cal G}$. The graph
has $N$ nodes, each representing a channel in $\CN$.  Consider two arbitrary
channels $n_1, n_2 \in \CN, n_1 \not= n_2$.  If the two channels have
the same gain for each of the users, i.e., $g_{mn_1} =
g_{mn_2}$, $\forall m \in \CM$, we denote it by $n_1 \cong
n_2$. Checking whether or not this is the case runs obviously in
$O(M)$ time. If $n_1 \cong n_2$, we add edge $(n_1, n_2)$ to ${\cal
G}$. Doing so for all unordered channel pairs has time complexity
$O(MN^2)$. Next, note that the equivalence relation of channels is
transitive, i.e., if $n_1 \cong n_2$ and $n_2 \cong n_3$, then
$n_1 \cong n_3$. Hence channels that are equivalent for all users form
a clique in ${\cal G}$, whereas channels that differ in gain for at least one
user are not connected in ${\cal G}$. Consequently the number of
strongly connected components in ${\cal G}$ equals the number of
channel groups, each of which contains channels being equivalent for
any user. Identifying the number of strongly connected components
requires no more than $O(N^2)$ time for ${\cal G}$. Therefore the
bottleneck lies in the $O(MN^2)$ complexity of obtaining the graph,
and the theorem follows.
\end{proof}

\begin{remark}
The tractability results of this section are not restricted to the
specific rate/power function defined in the section of system
model. The problem class remains tractable (although the overall
complexity may grow) for any function in ${\mathcal C}_{\text{inc}}$,
as long as the function admits polynomial-time rate allocation of
single user. $\Box$
\end{remark}

\section{Conclusions}
\label{sec:conclusion}

We have considered the OFDMA resource allocation problem of minimizing
the total power of channel allocation, so as to satisfy some rate
constraints. Although it has been known that assuming the most general
(and ill-behaved) increasing rate functions leads to NP-hard problems, we
have shown that the same conclusion holds even if we restrict the
class to increasing and concave rate functions. Interestingly, the
problem admits a polynomial-time solution if the rate function is an
increasing linear function. Hence, progress in the fundamental
understanding on the tractability of the problem is made in the
following sense: we have sharpened the boundary of tractability to
between increasing concave and increasing linear rate functions.
Finally, we have also identified specific cases when the problem
remains NP-hard, or admits polynomial-time solutions, under various
restrictions.

\appendices
\def\theequation{A\arabic{equation}}
\setcounter{equation}{0}
\section{Proof of Theorem \ref{theo:nphard}}\label{sec:appa}

There is no doubt that MPCA is in NP. The NP-hardness proof uses a
polynomial-time reduction from the 3-satisfiability (3-SAT) problem
that is NP-complete\cite{GaJo79}.  A 3-SAT instance consists in a
number of boolean variables, and a set of clauses each consisting of a
disjunction of exactly three literals. A literal is either a variable
or its negation. The output is a yes/no answer to whether or not there
is an assignment of boolean values to the variables, such that all the
clauses become true.  Denote by $v$ and $w$ the numbers of variables
and clauses, respectively. For any binary variable $z$, its negation
is denoted by $\hat z$. For the proof, we consider 3-SAT where each
variable and its negation together appear at most 4 times in the
clauses. Note that 3-SAT remains NP-complete with this restriction
\cite{To84}. Without loss of generality, we assume that each variable
$z$ appears in at least one clause, and the same holds for its
negation ${\hat z}$, because otherwise the optimal value of the
variable becomes known, and the variable can be discarded. Hence the
total number of occurrences of each literal in the clauses is between
one and three.

We construct an MPCA instance with $M=2v+w$ and $N=7v+w$. We
categorize the users and channels into groups, and, for convenience,
name the groups based on their roles in the proof. The users consist
in $2v$ literal users and $w$ clause users. The channels are composed
by three groups: $v$ super-channels, $6v$ literal channels, and $w$
auxiliary channels. The rate target $R_m = 1.0, \forall m \in \CM$.

\begin{figure}[t!]
\centering
\subfigure[Users and channels for a variable.]{
\includegraphics[scale=0.7]{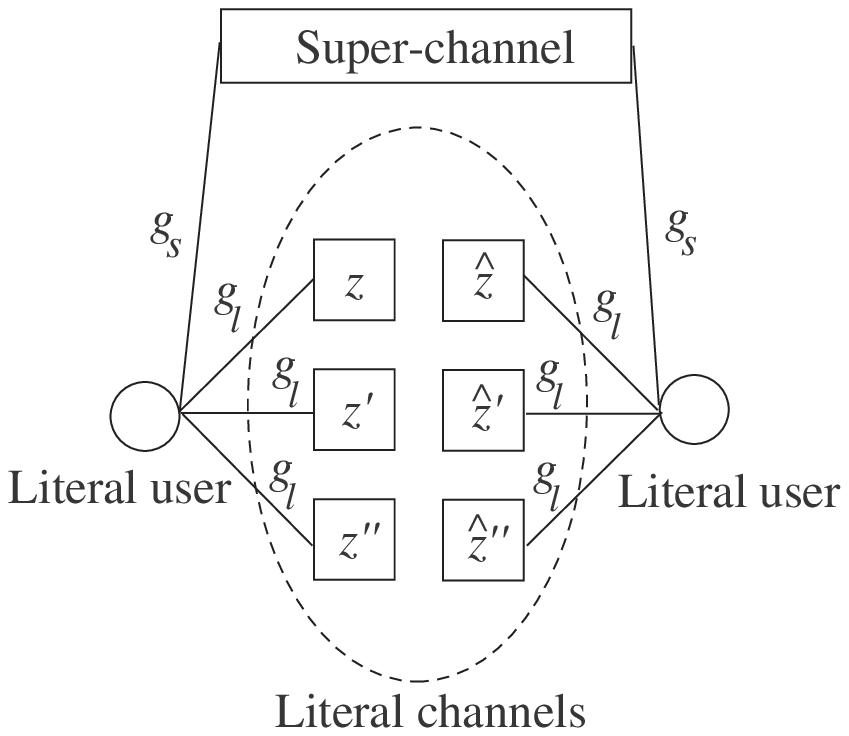}
\label{fig:literal}
}
\hspace{0.5em}
\subfigure[User and channels for a clause.]{
\includegraphics[scale=.7]{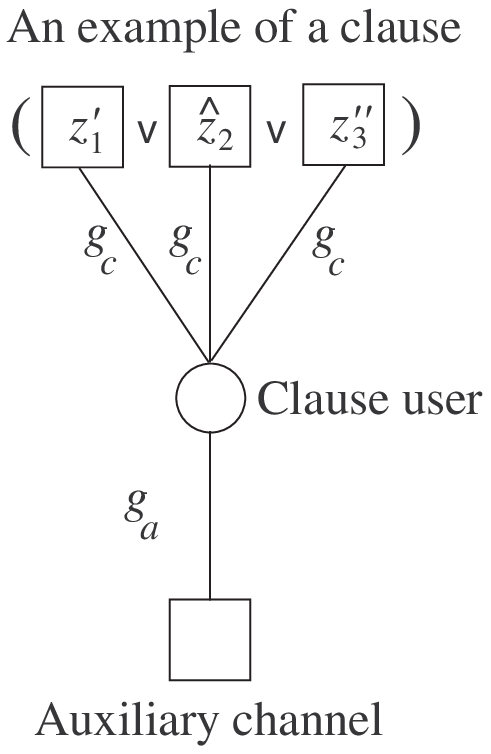}
\label{fig:clause}
}\vspace{-2mm}
\caption{An illustration of problem reduction.}
\label{fig:reduction}\vspace{-4mm}
\end{figure}

Problem reduction is illustrated in Fig. \ref{fig:reduction}.
For each binary variable $z$, three identical literal channels,
denoted by $z$, $z'$ and $z''$, are defined. A similar construction is
done for $\hat z$. For this group of six literal channels, one
super-channel is defined.  We introduce two literal users for the
seven channels.  One user has channel gain $g_l$ on the three literal
channels $z$, $z'$, and $z''$, and the other, complementary literal
user has channel gain $g_l$ on the remaining three literal
channels. Both users have channel gain $g_s$ on the super-channel. See
Fig. \ref{fig:literal}. Next, recall that each literal appears at most
three times in the clauses in the 3-SAT instance. In the proof, for
any binary variable $z$ appearing $t$ times in total in the clauses,
with $1 \leq t \leq 3$, the occurrences are represented by any $t$
elements in $\{z, z', z''\}$ in any order. A similar representation is
performed for the negation $\hat z$. For each clause, we introduce one
clause user with gain $g_c$ on the channels corresponding to the
original literals in the clause. In addition, one auxiliary channel is
defined per clause user with channel gain $g_a$.  See Fig. \ref{fig:clause}. We set $g_s = g_c = 1$, $g_a =
\frac{1}{0.9w + 0.1}$, $g_l = \frac{g_a} {26} = \frac{1}{26
\cdot (0.9w+0.1)}$. For the user-channel combinations other than those specified,
the channel gain is $g_{\epsilon} =
\frac{1}{53w}$. For each user, we refer to the four channels
with gain higher than $g_\epsilon$ as valid channels, and the other
$7v+w-4$ channels with gain $g_\epsilon$ as invalid channels. From
the construction, clearly the reduction is polynomial.

We provide several lemmas characterizing the optimum to the
MPCA instance. The first three lemmas
use the following optimality conditions of single-user rate allocation (e.g.,
\cite{FeMaRe08}).  First, for any user, the derivatives of the power
function, evaluated at the allocated rates, are equal on all channels
with positive rates. Second, for channels not used, the function
derivatives at zero rate are strictly higher than those of the used
channels. For $f(x) =
\frac{2^x-1}{g}$, where $x$ is the rate allocated and $g$ is
the channel gain, the derivative $f'(x) = \ln(2) \frac{2^x}
{g}$.

\setcounter{theorem}{0}
\begin{lemma}
\label{theo:epsilon}
There is an optimum allocation in which no user is allocated
any invalid channel.
\end{lemma}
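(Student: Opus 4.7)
My plan is to decompose the statement into a single-user water-filling step and a combinatorial rearrangement of the partition. The first step I would establish is that, whenever $\CN_m$ contains at least one valid channel of user $m$, every invalid channel in $\CN_m$ is given zero rate at the single-user optimum. The argument would start from the derivatives $f'_{mn}(r) = \ln(2)\, 2^r / g_{mn}$ of the power functions $f_{mn}(r) = (2^r - 1)/g_{mn}$ and check, from the explicit gain values $g_\epsilon = 1/(53w)$, $g_l = 1/[26(0.9w + 0.1)]$, $g_s = g_c = 1$, and $g_a = 1/(0.9w + 0.1)$, that $g/g_\epsilon > 2$ for every valid gain $g$ whenever $w \geq 1$. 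Substituting into the equal-derivative condition on positively-rated channels together with the rate constraint $\sum_n r_{mn} = R_m = 1$ would show that any candidate equal-derivative solution mixing at least one valid channel with one or more invalid ones requires strictly negative rate on the invalid ones; non-negativity then clips those rates to zero.

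The second step is to rearrange any optimum $S^*$ so that every user's allocated set contains at least one of its four valid channels. Combined with the first step, this would finish the proof, since zero-rate invalid channels in any $\CN_m$ can then be moved to a user for whom they are valid (every channel in the reduction has such a user by construction). I would argue the rearrangement by an exchange. Writing $V_m$ for $m$'s four valid channels, if $\CN_m \cap V_m = \emptyset$, then either some $v \in V_m$ is unallocated in $S^*$, in which case moving $v$ into $\CN_m$ and re-water-filling strictly decreases the total power by the first step; or every $v \in V_m$ sits in some $\CN_{m'}$ with $m' \neq m$. In the latter case I would invoke the structural fact that the super-channel of a variable is valid only for its two literal users, a literal channel is valid for at most $1 + 3 = 4$ users, and an auxiliary channel is valid only for its unique clause user, to extract a $v \in V_m$ whose holder $m'$ finds $v$ itself invalid. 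Swapping $v$ into $\CN_m$, and if needed sending one of $m$'s invalid channels to $m'$ in exchange, is weakly cost-decreasing for $m'$ by applying the first step to $m'$ and strictly cost-decreasing for $m$.

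Since each exchange strictly reduces the count of (user, invalid-channel) incidences with positive rate while not increasing the objective, the process terminates in an optimum of the desired form. The step I expect to be the main obstacle is the existence claim inside the second case of the exchange, namely that not all four channels of $V_m$ can simultaneously reside with users for whom they are also valid. I would try to settle it directly from the per-channel valid-user caps above: the four channels of $V_m$ carry at most $2 + 3 \cdot 4 = 14$ ``validly-valid'' holder slots spread over only a bounded number of distinct users other than $m$, and the pool of $N - M = 5v$ surplus channels gives room to absorb any displacement. If a direct count falls short, I would fall back on a Hall-style augmenting-path argument that guarantees a matching of each user to at least one of its valid channels in parallel, from which the rearrangement is then read off.
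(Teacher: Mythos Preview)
Your two-step plan (single-user water-filling derivative check, then a combinatorial exchange to ensure every user holds at least one valid channel) is exactly the skeleton the paper uses. Step~1 is correct and matches the paper's argument line for line.

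In Step~2 there is one factual slip and one genuine gap. The slip: a literal channel is valid for at most \emph{two} users---its unique literal user plus at most one clause user---because each of $z,z',z''$ encodes a single clause occurrence, not up to three. So your ``$1+3=4$'' cap is wrong, and the $2+3\cdot 4$ count built on it collapses. The gap is the one you yourself flag: it is \emph{not} true in general that some $v\in V_m$ must be held by a user who finds it invalid. For the literal user of a variable $z$ appearing three times, all three literal channels can sit with clause users for whom they are valid, and the super-channel with the complementary literal user; your counting does not exclude this, and the Hall fallback only yields a matching, not a cost-nonincreasing rearrangement from an arbitrary optimum (swapping a valid channel away from a user who also finds it valid can raise that user's power).

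The paper avoids the abstraction by processing the two user types in order and exploiting one structural asymmetry you underuse: each auxiliary channel is valid for \emph{exactly one} user (its clause user). So clause users are handled first---if a clause user $m_1$ lacks all valid channels, its auxiliary channel sits with some $m_2$ for whom it is certainly invalid, and the swap is free. Only after all clause users are cleaned does the paper turn to literal users, choosing a literal channel for the swap. Your route can be completed, but the cleanest fix is to mimic this: for clause users use the private auxiliary channel directly, and for literal users argue via the specific channel structure rather than a generic count or matching.
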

\begin{proof}
Suppose that at optimum a clause user $m_1$ is allocated at least one
invalid channel. Assume $m_1$ is also allocated any valid channel,
then the invalid channels carry zero rate, because putting the entire
rate of 1.0 on the auxiliary channel, the function derivative is at
most $\ln(2)
\cdot \frac{2}{g_a} =2\ln(2)(0.9w+0.1)$, whereas the function derivative
for any invalid channel, at zero rate, is $\ln(2) \cdot 53w >
2\ln(2)(0.9w+0.1)$. Thus the invalid channels can be eliminated from
the allocation of $m_1$. Assume now the auxiliary channel is allocated
to another user, say $m_2$. By construction, the auxiliary channel of
$m_1$ is invalid for $m_2$. Consider re-allocating any invalid channel
of $m_1$ to $m_2$, and allocating the auxiliary channel to $m_1$.
Clearly, the total power will not increase. At this stage, the
remaining invalid channels allocated to $m_1$ carry zero flow.
Repeating the argument, we obtain an optimal allocation
in which no clause user is allocated any invalid channel.

Let $m_1$ be any literal user and suppose it is allocated one or more
invalid channels at optimum. If $m_1$ is allocated any of its four
valid channels, the function derivative at rate 1.0 is at most $\ln(2)
\cdot 52 (0.9w + 0.1) < \ln(2) \cdot 53w$, and therefore the invalid
channels carry zero rate and can be removed from $m_1$'s allocation.
Assume therefore all four valid channels are allocated to other users.
Consider any literal channel of $m_1$, and suppose it is allocated to
user $m_2$. For $m_2$, this literal channel is an invalid one.
Swapping the allocation of the literal channel and any invalid channel
currently allocated to $m_1$, the total power will not grow, and the
remaining invalid channels allocated to $m_1$ can be released.
The lemma follows from applying the procedure repeatedly.
\end{proof}

\begin{lemma}
\label{theo:super}
If a literal user is allocated its super-channel in the optimal
solution, then none of the three literal channels is allocated to the
same user.
\end{lemma}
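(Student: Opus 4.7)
The plan is to combine the first-order (water-filling) optimality conditions already used in the proof of Lemma~\ref{theo:epsilon} with a simple reallocation argument, and to produce, in the style of Lemma~\ref{theo:epsilon}, an optimum of the stated form. Starting from any optimum in which no invalid channel is allocated to any user (which exists by Lemma~\ref{theo:epsilon}), I would suppose, towards refinement, that some literal user $m$ has both its super-channel $s$ and at least one of its literal channels, call it $\ell$, in $\CN_m$. Given the partition, the rates on $\CN_m$ are determined by single-user water-filling, which equalizes $f'_{mn}(x) = (\ln 2)\, 2^{x}/g_{mn}$ across channels carrying positive rate and requires $f'_{mn}(0)$ to dominate this common value on inactive ones.

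I would then compare the relevant derivatives. Consider the candidate single-user allocation that places the full rate target $R_m = 1.0$ on the super-channel alone: the derivative there equals $2\ln 2$, while on every literal channel of $m$ the derivative at zero rate equals $(\ln 2)/g_l = 26(0.9w+0.1)\ln 2$, which strictly exceeds $2\ln 2$ for every $w \geq 1$. The KKT characterization therefore certifies this candidate as the water-filling optimum over the entire set $\CN_m$, so $\ell$ carries zero rate and contributes nothing to $m$'s power expenditure.

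To finish, I would release $\ell$ from $\CN_m$: since $\ell$ carried no rate, removing it does not change the total power, and the resulting allocation remains feasible because the problem's formal conditions only require the sets $\CN_m$ to be pairwise disjoint subsets of $\CN$. Iterating over the (at most three) literal channels of $m$ possibly in $\CN_m$, and then over all literal users holding their own super-channel, delivers an optimum of equal cost that satisfies the claim.

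The main obstacle is purely bookkeeping rather than analysis. First, one has to phrase the conclusion in the existence-of-an-optimum style of Lemma~\ref{theo:epsilon}, which is automatic from the cost-preserving reallocation above. Second, one must check that the step preserves the no-invalid-channel conclusion of Lemma~\ref{theo:epsilon}; this is immediate, because the released channel is simply dropped rather than reassigned, so the ambient optimum remains within the class produced by that lemma.
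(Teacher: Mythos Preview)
Your proposal is correct and takes essentially the same approach as the paper: the derivative comparison $2\ln 2 < (\ln 2)/g_l = 26(0.9w+0.1)\ln 2$, showing that any literal channel carries zero rate whenever the super-channel is present, is exactly the paper's argument. Your explicit release step and prior invocation of Lemma~\ref{theo:epsilon} are more careful than the paper's terse two-line proof, which leaves the zero-rate-implies-removable point implicit, but the substance is identical.
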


\begin{proof}
Putting the entire rate of 1.0 on the
super-channel, the derivative value is $2\ln(2)$. For any literal
channel, the derivative at zero rate is $\frac{\ln(2)}{g_l}$. That
$g_l = \frac{1}{26 \cdot (0.9w+0.1)}$ and $w \geq 1$ lead to
$\frac{\ln(2)}{g_l} > 2\ln(2)$, and the result follows.
\end{proof}

\begin{lemma}
\label{theo:literal}
If a literal user is not allocated its super-channel in the optimal
solution, then the user is allocated all the three literal channels
\end{lemma}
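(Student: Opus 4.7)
The plan is to argue by contradiction via a channel-exchange that strictly reduces the total transmit power. Let $m_1$ be a literal user that is not allocated its super-channel in an optimal solution. By Lemma~\ref{theo:epsilon}, $m_1$ uses only valid channels, so its allocation is some subset $S\subseteq\{z,z',z''\}$ of the three literal channels associated with its variable. I would assume toward a contradiction that $k=|S|<3$ and pick some $n\in\{z,z',z''\}\setminus S$; note that $k\geq 1$ because $m_1$ must meet a positive rate using valid channels only.

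I would first dispose of the easy sub-case in which $n$ carries zero rate at the optimum (either because $n$ is unallocated, or because the user holding $n$ gives it rate zero). Reassigning $n$ to $m_1$ leaves every other user's power unchanged, while $m_1$ now spreads its rate over $k+1$ identically-gained channels. Since $k(2^{1/k}-1)/g_l$ is strictly decreasing in $k$ (by convexity of $2^x-1$ and symmetry of the gains), $m_1$'s power strictly drops, contradicting optimality.

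The main sub-case is when $n$ is allocated with positive rate to another user $m_c$. By Lemma~\ref{theo:epsilon} and the reduction's construction, only $m_1$ (gain $g_l$) and at most one clause user (gain $g_c=1$) have gain above $g_\epsilon$ on $n$, so $m_c$ must be that clause user. I would then swap $n$ from $m_c$ to $m_1$ and reoptimize both users' rates. Using the literal-channel symmetry and $1/g_l=26(0.9w+0.1)$, the power savings for $m_1$ equal
\[
\bigl[k(2^{1/k}-1)-(k+1)(2^{1/(k+1)}-1)\bigr]/g_l,
\]
whose minimum over $k\in\{1,2\}$ is $(2\sqrt{2}+1-3\cdot 2^{1/3})\cdot 26(0.9w+0.1)$. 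On the loss side, a crude upper bound on the cost to $m_c$ is obtained by letting $m_c$ carry its entire rate on its auxiliary channel alone, giving $1/g_a=0.9w+0.1$. The numerical inequality $26(2\sqrt{2}+1-3\cdot 2^{1/3})>1$ then yields a strict net reduction in total power, contradicting optimality and forcing $k=3$.

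The main obstacle is the numerical calibration rather than the structural argument: the binding case is $k=2$, whose per-channel marginal gain over $k=3$ is tiny ($\approx 0.049/g_l$), so the bound on $m_1$'s savings only barely exceeds the bound on $m_c$'s loss. The reduction's parameter choice $g_l=g_a/26$ is tailored precisely so that this marginal inequality survives uniformly in $w$; once Lemma~\ref{theo:epsilon} confines every user to its valid channels, the remaining work is this algebraic check together with verifying that the same argument covers $k=1$ (where the inequality has much greater slack).
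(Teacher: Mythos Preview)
Your proof is correct and follows essentially the same exchange argument as the paper: compare the literal user's power on $k$ versus $k+1$ identical literal channels, bound the clause user's loss by $1/g_a$ via the auxiliary channel, and verify the binding $k=2$ inequality using $g_l=g_a/26$. The only cosmetic difference is that you make the zero-rate sub-case explicit, whereas the paper absorbs it into the assumption that the missing literal channel carries strictly positive clause-user rate.
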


\begin{proof}
By the assumption of the lemma and Lemma \ref{theo:epsilon}, the
literal user in question is allocated one, two, or all three of its
literal channels, with total power $f_1 = \frac{1}{g_l}$, $f_2 = 2
\cdot \frac{2^{1/2}-1}{g_l}$, and $f_3 = 3 \cdot
\frac{2^{1/3}-1}{g_l}$, respectively. Note that,
in the latter two cases, it is optimal to split the rate evenly
because of the identical gain values. Clearly, $f_3 < f_2 < f_1$.

We prove that cases one and two are not optimal. Suppose that, at optimum,
the literal user is allocated two of the literal channels. Then the
remaining literal channel is used to carry a strictly positive amount
of flow of a clause user. Consider modifying the solution by allocating
all the three literal channels to the literal user, and letting the
clause user use the auxiliary channel only.  The power saving for the
literal user is exactly $f_2-f_3 > \frac{0.04}{g_l} > \frac{1}{g_a} $,
whereas the power increase for the clause user is less than
$\frac{1}{g_a}$. This contradicts the optimality assumption.
Hence case two is not optimal. Since $f_1-f_2 > f_2-f_3$, a similar
argument applies to case one, and the result follows.
\end{proof}

By Lemmas \ref{theo:super}--\ref{theo:literal}, at optimum, a literal
user will use either the super-channel only, or all the three literal
channels. Hence, for any literal in the 3-SAT instance, either none or
all of the corresponding three literal channels become blocked for the
clause users. Consequently, there is a unique mapping between a
true/false variable assignment in the 3-SAT instance and the
availability of literal channels to the clause users in the MPCA
instance. The total power consumption of all the literal users equals
exactly $v + 78v
\cdot (2^{1/3}-1)(0.9w+0.1)$ at optimum.
In the remainder of the proof, we concentrate on the power consumption
of the clause users.

\begin{lemma}
\label{theo:max}
If every clause user is allocated at least one of the three literal
channels corresponding to the literals in the clause in the 3-SAT instance,
then the total power for all clause users is at most $w$.
\end{lemma}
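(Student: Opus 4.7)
The plan is to bound the total power by exhibiting, for each clause user, a simple (suboptimal) rate allocation on that user's assigned channel set, and then to invoke optimality of water-filling to conclude that the true optimum does no worse. Concretely, I would observe the following: for any clause user $m$, the optimal single-user rate allocation over its allocated channel set (whatever that set is, so long as it contains at least one literal channel corresponding to a literal in the clause) cannot exceed the power cost of any particular feasible rate assignment on that same set.

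I would then pick the specific feasible assignment that places the entire rate requirement $R_m = 1.0$ on one literal channel guaranteed to be in the set by the lemma's hypothesis, and zero rate on every other allocated channel. By construction $g_{mn} = g_c = 1$ on such a literal channel, so the inverse rate function evaluates to $f_{mn}(1) = (2^{1}-1)/g_c = 1$. Hence the single-user optimum for clause user $m$ is at most $1$.

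Summing this per-user bound of $1$ over the $w$ clause users yields the claimed aggregate bound of $w$. The step requiring the most care is merely to verify that the feasibility of placing all rate on one channel really is independent of which other (literal or auxiliary) channels happen to be in the user's allocated set; this is immediate because additional allocated channels can always carry zero rate without affecting feasibility or cost. There is no real obstacle here: the lemma is a straightforward consequence of $g_c = 1$ together with $R_m = 1.0$, and the only thing to be explicit about is that a unit-gain, unit-rate assignment costs exactly one unit of power under the logarithmic rate function.
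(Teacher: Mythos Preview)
Your proposal is correct and essentially identical to the paper's own argument: place the entire rate of $1.0$ on one available literal channel, obtaining power $(2^{1}-1)/g_c = 1$ per clause user, and sum over the $w$ clause users. The paper's proof is just a terser version of yours, omitting the explicit appeal to optimality of water-filling that you spell out.
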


\begin{proof}
Allocating the entire rate of 1.0 to one literal
channel gives a power consumption of $\frac{2^1-1}{g_c}
= 1$ for the clause user. As there are $w$ clause users, the lemma follows.
\end{proof}

\begin{lemma}
\label{theo:min}
If at least one clause user is not allocated any of its three literal channels,
the total power for all clause users is strictly higher than
$w$.
\end{lemma}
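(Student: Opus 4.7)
The plan is to lower-bound the total clause-user power by two pieces: (i) what the clause user without any literal channel (call it $m_0$) necessarily consumes, and (ii) a uniform lower bound on every other clause user's consumption. The desired strict inequality then follows by elementary arithmetic.

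First, I would apply Lemma \ref{theo:epsilon} to $m_0$: since invalid channels are never used and, by hypothesis, no literal channel is allocated to $m_0$, its entire allocation must consist of its own auxiliary channel. Meeting rate $R_{m_0}=1.0$ on this single channel of gain $g_a = 1/(0.9w+0.1)$ costs exactly $1/g_a = 0.9w + 0.1$ in power.

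Next, for any other clause user $m$, the valid channels are the three literal channels of gain $g_c=1$ corresponding to its clause plus its own auxiliary channel of gain $g_a$; by Lemma \ref{theo:epsilon}, $m$'s allocation is a subset of these. Since restricting a user's available channels can only raise the minimum power required to meet a given rate, $m$'s actual power is bounded below by the water-filling optimum computed under the (most generous) assumption of full access to all four valid channels, even if some literal channels are actually sequestered by literal users. A brief water-filling calculation shows that for $w \geq 2$ the auxiliary gain satisfies $g_a \leq 1/1.9$, so at the water level $\mu = 2^{1/3}$ induced by the three gain-$1$ literal channels one has $\mu g_a < 1$; hence the auxiliary carries no rate at the optimum and $p_m \geq 3(2^{1/3}-1)$.

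Combining the two bounds, the total clause-user power is at least
\[
(0.9w + 0.1) + (w-1)\cdot 3(2^{1/3}-1).
\]
Subtracting $w$ from both sides and regrouping, the strict inequality "total $> w$" reduces to $(w-1)\bigl(3(2^{1/3}-1) - 0.1\bigr) > 0$, which holds for every $w \geq 2$ because $3(2^{1/3}-1) \approx 0.78 > 0.1$. The only subtlety, rather than a genuine obstacle, is the monotonicity step in the second paragraph: one must justify applying the water-filling lower bound under full channel availability regardless of which literal channels are blocked by literal users, which is immediate from the fact that pruning the available set can only raise the minimum achievable power.
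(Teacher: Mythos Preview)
Your argument is correct and follows essentially the same decomposition as the paper: isolate the clause user forced onto its auxiliary channel alone (power $1/g_a=0.9w+0.1$) and lower-bound each of the remaining $w-1$ clause users by the power under full access to all four valid channels. The only cosmetic difference is that the paper further coarsens the latter bound by replacing $g_a$ with $g_c=1$ so that all four channels are identical and the per-user bound becomes $4(2^{1/4}-1)$ with no water-filling analysis needed, whereas you keep the true $g_a$, show it is inactive for $w\ge 2$, and obtain the slightly sharper $3(2^{1/3}-1)$; both routes rely on $w\ge 2$ for the strict inequality.
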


\begin{proof}
By the assumption, at least one clause user is allocated the
auxiliary channel only, with power $\frac{1}{g_a}$. Each of the other
$w-1$ clause users is allocated at most four channels.  Assuming the
availability of all four channels and setting $g_a = g_c$ leads to an
under-estimation of the power consumption. The under-estimation has
a total power of $4(w-1)\frac{2^{1/4}-1}{g_c} +
\frac{1}{g_a}$ $= 4 (w-1) (2^{1/4}-1) + (0.9w+0.1)$ $> 0.4w - 0.4
+ 0.9w + 0.1 = 1.3w -0.3 \geq w$.
\end{proof}

By Lemmas \ref{theo:max}-\ref{theo:min}, the optimum power for the
clause users is at most $w$ if and only if the answer is yes to the
3-SAT instance. Thus the recognition version of MPCA
is NP-complete, and its optimization version is NP-hard.

\def\theequation{A\arabic{equation}}
\setcounter{equation}{0}
\section{Proof of Theorem \ref{theo:nphardrestriction}}\label{sec:appb}

In the proof of Theorem \ref{theo:nphard}, a 3-SAT instance of $v$
variables and $w$ clauses is reduced to an MPCA instance with $2v+w$
users and $7v+w$ channels, such that no user will be allocated more
than three channels at optimum.  We make an augmentation
by adding $2v$ channels, which we refer to as dummy
channels. The channel gain of the dummy channels equals $g_\epsilon$
(defined in the proof of Theorem
\ref{theo:nphard}) for all users. After the augmentation, there is a total of $9v+w$ channels, organized
in three blocks.  The sequence of channels is as follows. The first
block has $3v$ channels, including the $v$ super-channels and the $2v$
dummy channels, in a sequence of $v$ chunks of $3$ channels each. Each
chunk is composed by one super-channel and two dummy channels.  The
next block has the $6v$ literal channels, with $v$ chunks having $6$
channels each. Every chunk corresponds to a binary variable $z$ in the
3-SAT instance, and the six literal channels appear in the order $z,
z', z'', {\hat z}, {\hat z}'$ and ${\hat z}''$. The third block
contains the $w$ auxiliary channels. Consider the resulting MPCA
instance with the restriction that, for the literal and clause users,
respectively, the numbers of channels allocated per user are
three and one.  In addition, channel allocation must be consecutive in
the given sequence.

To prove the hardness result, consider first a relaxation of the
problem, in which the two restrictions of channel allocation are
ignored for the literal users. For this relaxation, it is clear that
Lemmas \ref{theo:epsilon}-\ref{theo:literal} remain valid.  By Lemma
\ref{theo:epsilon} and the signal-channel restriction of the clause
users, each of these users will be allocated one of the four valid
channels. Obviously, the optimum is to allocate one literal channel,
or the auxiliary channel if all the three literal channels are allocated
to literal users. Thus, as long as at least one literal channel is
available to every clause user, the result of Lemma \ref{theo:max}
holds. In addition, the validity of Lemma \ref{theo:min} obviously
remains. Therefore the optimum to the problem relaxation provides the
correct answer to the 3-SAT instance.

By Lemma \ref{theo:literal}, in the optimum of the relaxed problem,
each literal user is either allocated its three consecutive literal
channels, or the super-channel.  In the latter case, we modify the
solution by allocating the two dummy channels accompanying the
super-channel, without changing the total
power. After the modification, the allocation satisfies the
cardinality requirement and restriction of using consecutive channels
for all literal users, and the theorem follows.

\onehalfspacing
\bibliographystyle{plain}

\end{document}